\mathchardef\mhyphen="2D
\newcommand\placeqed{\tiny{\nobreak\enspace$\square$}}
\newtheorem{proposition}{Proposition}
\newenvironment{sproof}{\noindent \textit{Proof}.}
\def\ps@pprintTitle{%
	\let\@oddhead\@empty
	\let\@evenhead\@empty
	\def\@oddfoot{}%
	\let\@evenfoot\@oddfoot}
\begin{document}

\begin{frontmatter}

\title{A Smart Backtracking Algorithm for Computing Set Partitions with Parts of Certain Sizes}

\author{Samer Nofal \\ \href{mailto:samer.nofal@gju.edu.jo}{samer.nofal@gju.edu.jo} }
\address{Department of Computer Science, German Jordanian University\\ 
P.O. Box 35247, Amman 11180, Jordan \\ Tel. +962 6 429 4444}

\begin{abstract}
Let $\alpha=\{a_1,a_2,a_3,...,a_n\}$ be a set of elements, $\delta < n$ be a non-negative integer, and $\Gamma: \alpha \to \{0, 1, 2, ..., n\}$ be a total mapping. Then, we call $\Gamma$ a \emph{partition} of $\alpha$ if and only if for all $x \in \alpha$, $\Gamma(x) \neq 0$. Further, we call $\Gamma$ a $\delta$-\emph{partition} of $\alpha$ if and only if $\Gamma$ is a partition of $\alpha$ and for all $i \in \{1, 2, 3, ..., n\}$, $|\{x: \Gamma(x)=i\}| > \delta$. We give a non-trivial algorithm that computes all $\delta$-partitions of $\alpha$ in $\Omega(n)$ time. On the opposite, a naive generate-and-test algorithm would compute all $\delta$-partitions of $\alpha$ in $\Omega(nB_n)$ time where $B_n$ is the Bell number.
\end{abstract}
\begin{keyword}
Algorithm \sep Backtracking \sep Restricted Set Partitions
\end{keyword}
\end{frontmatter}

\section{Introduction}
Let $\alpha=\{a_1,a_2,a_3,...,a_n\}$ be a set of elements, $\delta < n$ be a non-negative integer, and $\Gamma: \alpha \to \{0, 1, 2, ..., n\}$ be a total mapping. Then, we call $\Gamma$ a \emph{partition} of $\alpha$ if and only if for all $x \in \alpha$, $\Gamma(x) \neq 0$. Further, we call $\Gamma$ a $\delta$-\emph{partition} of $\alpha$ if and only if $\Gamma$ is a partition of $\alpha$ and for all $i \in \{1, 2, 3, ..., n\}$, $|\{x: \Gamma(x)=i\}| > \delta$. For example, the partitions of $\{a_1,a_2,a_3\}$ are $\{\{a_1,a_2,a_3\}\}$, $\{\{a_1,a_2\},\{a_3\}\}$, $\{\{a_1,a_3\},\{a_2\}\}$, $\{\{a_1\},\{a_2,a_3\}\}$, $\{\{a_1\},\{a_2\},\{a_3\}\}$. In contrast, the $1$-partition of $\{a_1,a_2,a_3\}$ is $\{\{a_1,a_2,a_3\}\}$, which is equal to the $2$-partition of $\{a_1,a_2,a_3\}$.~Note that every part of a partition has all of its elements mapped to the same integer.

Perhaps, the earliest work on formal algorithms for computing set partitions goes back to the article of \cite{DBLP:journals/cacm/Hutchinson63} that is published in the year 1963. After then, until the year 2013, one can see several proposals for different algorithms for computing set partitions, see the manuscripts of  \cite{DBLP:journals/jacm/Ehrlich73,DBLP:journals/ipl/Kaye76,DBLP:conf/isaac/Ruskey93,DBLP:journals/cj/Er88a,DBLP:journals/cj/DjokicMSSS89,DBLP:journals/jmma/MansourN08,Kokosinski,DBLP:journals/paapp/LeeTCT97,DBLP:conf/wads/Williams13}. As to formal algorithms for computing different kinds of \emph{restricted} set partitions, more recent work has been published until the year 2017, see the work of \cite{Huemer20091509,DBLP:journals/ijac/ConflittiM17,DBLP:journals/tcs/ConflittiM15}. Most lately, in 2020, the number of $\delta$-partitions of a given set has been estimated in \cite{DBLP:journals/ejc/CulverW20}. Nonetheless, we do not see in the literature any work that formalizes an algorithm for enumerating $\delta$-partitions. The so-far open issue that we close in this paper is regarding whether the $\delta$-partitions of $\alpha$ can be computed without computing the partitions of $\alpha$ in advance. As we show in this paper, this issue is not as trivial as it might seem at first sight, especially if any naive generate-and-test procedure has to be avoided. To resolve this issue, in the next section, we give a formal proof of the correctness of a smart backtracking algorithm that computes the $\delta$-partitions of $\alpha$ efficiently without a prior construction of the partitions of $\alpha$. The algorithm does so by checking (ahead) two properties identifying parts of the search space that are devoid of any $\delta$-partition of~$\alpha$.

\textbf{New results.} In section 2, we show that for any set $\alpha=\{a_1,a_2,a_3,...,a_n\}$, for any non-negative integer $\delta < n$, in the best-case scenario, the $\delta$-partitions of $\alpha$ can be computed in $\Omega(n)$ time, whereas in the worst-case scenario, the $\delta$-partitions might be computed in $\mathcal{O}(nB_n)$ time, where $B_n$ is the Bell number. In contrast, in the best and worst scenarios, a naive generate-and-test algorithm would compute the $\delta$-partitions of $\alpha$ in $\Omega(nB_n)$ time.

\section{Enumerating all $\delta$-partitions of a given set $\alpha$}
Our approach is incremental. We start with every element in a given set $\alpha=\{a_1,$ $a_2,$ $a_3,$ $...,a_n\}$ being mapped to $0$. Then, we compute $\delta$-partitions of $\alpha$ by assigning every $a_i$ to a positive integer provided that every $(a_j)_{j<i}$ is already mapped to some positive integer. Let $\Gamma_1$ and $\Gamma_2$ be partitions of $\alpha$. We say that $\Gamma_1$ and $\Gamma_2$ are \emph{identical} if and only if 
\begin{equation}
\forall j~\exists k \text{ such that }\{x : \Gamma_1(x)=j\} = \{x: \Gamma_2(x)=k\}.
\end{equation}
Obviously, if two $\delta$-partitions are identical, then it suffices to compute one of them. To this end, we use a rule for mapping elements of $\alpha$ to positive integers as specified in the next proposition.
\begin{proposition}
Let $\alpha=\{a_1,a_2,a_3,...,a_n\}$ be a set of elements, $\Gamma_1$ be a partition of $\alpha$ with $\Gamma_1(a_i)=\lambda$ for some $a_i$ such that $\lambda > \eta+1$, where
\begin{equation}
    \eta =
    \begin{cases*}
      0 & if $i=1$ \\
      m: \exists ((a_k)_{k<i},m) \in \Gamma_1 \text{ with } \forall j<i,~m\ge \Gamma_1(a_j) & if $i>1$
    \end{cases*}
  \end{equation}
Then, there exists $\Gamma_2$ identical to $\Gamma_1$ such that for every $(a_u)_{u<i}$ it is the case that $\Gamma_2(a_u)=\Gamma_1(a_u)$ and $\Gamma_2(a_i) = \eta +1$.
\end{proposition}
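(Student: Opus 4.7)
The plan is to construct $\Gamma_2$ from $\Gamma_1$ by \emph{swapping} the two labels $\lambda$ and $\eta+1$. Concretely, I would define
\[
\Gamma_2(x) = \begin{cases} \eta+1 & \text{if } \Gamma_1(x) = \lambda, \\ \lambda & \text{if } \Gamma_1(x) = \eta+1, \\ \Gamma_1(x) & \text{otherwise.} \end{cases}
\]
This is the natural choice: the conclusion pins down $\Gamma_2(a_i) = \eta+1$ and forces $\Gamma_2$ to agree with $\Gamma_1$ on the prefix $a_1,\ldots,a_{i-1}$, and the cheapest way to achieve both at once while keeping the underlying family of parts unchanged is a two-label swap.

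First I would verify that $\Gamma_2$ is a partition of $\alpha$: since $\Gamma_1$ never attains $0$ and the swap only exchanges the two non-zero values $\lambda$ and $\eta+1$, no element of $\alpha$ is sent to $0$ by $\Gamma_2$. Second, I would show that $\Gamma_2$ is identical to $\Gamma_1$ in the sense of equation~(1): a label swap relabels equivalence classes without merging or splitting any of them, so for each $j \notin \{\lambda,\eta+1\}$ one may take $k = j$, for $j = \lambda$ one takes $k = \eta+1$, and for $j = \eta+1$ one takes $k = \lambda$; for values $j$ not attained by $\Gamma_1$ the fiber is empty and any label not attained by $\Gamma_2$ serves as $k$. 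Third, I would verify the two conclusions: for every $u < i$ the definition of $\eta$ yields $\Gamma_1(a_u) \le \eta$, and the hypothesis $\lambda > \eta+1$ then forces $\Gamma_1(a_u) \notin \{\lambda,\eta+1\}$, hence $\Gamma_2(a_u) = \Gamma_1(a_u)$; finally $\Gamma_2(a_i) = \eta+1$ is immediate from $\Gamma_1(a_i) = \lambda$ and the first clause of the definition.

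The only subtlety I anticipate is the degenerate case in which no element of $\alpha$ carries the label $\eta+1$ under $\Gamma_1$ (this always happens when $i=1$, where $\eta = 0$, and may happen for larger $i$ as well). In that case the second clause of the piecewise definition is vacuous, and the swap reduces to a one-way renaming of the class formerly labeled $\lambda$ to $\eta+1$. I would mention this explicitly to make clear that the identity argument still goes through, since the collection of non-empty fibers is preserved either way. Beyond this bookkeeping I do not expect any real obstacle; once the swap is written down, the proof is a routine verification driven entirely by the inequality $\Gamma_1(a_u) \le \eta < \eta+1 < \lambda$.
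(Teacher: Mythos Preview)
Your proposal is correct and is essentially the same argument as the paper's: the paper also defines $\Gamma_2$ by swapping the labels $\lambda$ and $\eta+1$ (phrased as three set-building steps rather than a piecewise definition) and then uses $\Gamma_1(a_u)\le\eta<\eta+1<\lambda$ to conclude that the swap fixes every $a_u$ with $u<i$. Your treatment is in fact slightly more careful, since you explicitly verify that $\Gamma_2$ is a partition and address the degenerate case where $\eta+1$ is not attained by $\Gamma_1$, neither of which the paper spells out.
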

\begin{sproof}
Let us construct $\Gamma_2$ from $\Gamma_1$ by the following steps:
\begin{enumerate}
\item[1.]$\Gamma_2 \gets \emptyset$;
\item[2.] for each $x$ with $\Gamma_1(x)=\lambda$, $\Gamma_2 \gets \Gamma_2 \cup \{(x,\eta +1)\}$;
\item[3.] for each $x$ with $\Gamma_1(x) =\eta +1$, $\Gamma_2 \gets \Gamma_2 \cup \{(x,\lambda)\}$;
\item[4.] for each $x$ with $\Gamma_1(x) \notin \{\lambda,\eta +1\}$, $\Gamma_2 \gets \Gamma_2 \cup \{(x,\Gamma_1(x))\}$.
\end{enumerate}
Notably, this process produces $\Gamma_2$ from $\Gamma_1$ such that (1) holds, and so, $\Gamma_2$ and $\Gamma_1$ are identical. 
Due to step 2 in the process, $\Gamma_2(a_i) = \eta +1$. However, we need to show that
\begin{equation}
\forall (a_u)_{u<i},~\Gamma_2(a_u)=\Gamma_1(a_u).
\end{equation}
Observe, $\Gamma_2(a_i)=\eta +1$ and $\Gamma_1(a_i) > \eta + 1$. Hence, according to (2), for all $(a_u)_{u<i}$
\begin{equation}
\Gamma_1(a_u) < \Gamma_2(a_i) < \Gamma_1(a_i).
\end{equation}
Thus, we note that steps 2--3, in the process above, do not apply to those elements that are in $\{a_u \in \alpha : u<i\}$; but, step 4 applies to them, and so, (3) holds. \placeqed
\end{sproof}
Therefore, Proposition 1 indicates that in constructing a partition $\Gamma$ of a given set $\alpha=\{a_1,a_2,a_3,...,a_n\}$, $a_1$ is assigned only to $1$ and, every $(a_i)_{i>1}$ needs only to be assigned to a positive integer that is not greater than $m+1$, where for some $k<i$ it holds that $\Gamma(a_{k})=m$ and $m\ge \Gamma(a_j)$ for all $j<i$.

As we are concerned with computing the $\delta$-partitions of a given set, in the next two propositions we specify the mappings that can not grow to a $\delta$-partition.
\begin{proposition}
Let $\alpha=\{a_1, a_2, a_3, ..., a_n\}$ be a set of elements, $\delta <n$ be a non-negative integer, $\Gamma_1: \alpha \to \{0, 1, 2, 3, ..., n\}$ be a total mapping, and $\beta$ be a nonempty set such that
\begin{equation}
\beta = \{j>0: 0 < |\{x:\Gamma_1(x)=j\}|\le \delta\}
\end{equation}
with
\begin{equation}
|\{x:\Gamma_1(x)=0\}| < \sum_{j \in \beta} \delta - |\{x:\Gamma_1(x)=j\}|+1.
\end{equation}
Then, for all $\Gamma_2 \supseteq \{(x,j)\in \Gamma_1: j>0\}$, $\Gamma_2$ is not a $\delta$-partition of $\alpha$.
\end{proposition}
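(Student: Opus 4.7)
The plan is to argue by contradiction: assume some $\Gamma_2 \supseteq \{(x,j) \in \Gamma_1 : j > 0\}$ is a $\delta$-partition of $\alpha$, and derive a contradiction with inequality (6) via a counting argument on how many elements of the $0$-class of $\Gamma_1$ must be consumed to repair the under-sized parts indexed by $\beta$.

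First, I would fix some $j \in \beta$ and observe that since $\Gamma_2$ extends the positive part of $\Gamma_1$, we have $\{x : \Gamma_1(x) = j\} \subseteq \{x : \Gamma_2(x) = j\}$. Because $\Gamma_2$ is a $\delta$-partition, the right-hand side has cardinality at least $\delta + 1$, so the number of elements in $\{x : \Gamma_2(x) = j\} \setminus \{x : \Gamma_1(x) = j\}$ is at least $\delta - |\{x : \Gamma_1(x) = j\}| + 1$. These extra elements cannot have been mapped to any positive integer other than $j$ by $\Gamma_1$ (the positive assignments of $\Gamma_1$ are preserved in $\Gamma_2$), nor can they have been mapped to $j$ (else they would already be counted in $\{x:\Gamma_1(x)=j\}$); so they must all lie in $\{x : \Gamma_1(x) = 0\}$.

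Next, the sets $\{x : \Gamma_2(x) = j\} \setminus \{x : \Gamma_1(x) = j\}$ for distinct $j \in \beta$ are pairwise disjoint (since $\Gamma_2$ assigns each element a unique image). Summing their sizes over $j \in \beta$ therefore yields a lower bound of $\sum_{j \in \beta}\bigl(\delta - |\{x : \Gamma_1(x) = j\}| + 1\bigr)$ on the number of elements of $\{x : \Gamma_1(x) = 0\}$ that $\Gamma_2$ reassigns to classes in $\beta$. In particular,
\begin{equation*}
|\{x : \Gamma_1(x) = 0\}| \;\geq\; \sum_{j \in \beta}\bigl(\delta - |\{x : \Gamma_1(x) = j\}| + 1\bigr),
\end{equation*}
which directly contradicts hypothesis (6), completing the proof.

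I do not expect a real obstacle here; the argument is essentially a pigeonhole/double-counting observation. The only subtlety worth stating carefully is that elements of the $0$-class of $\Gamma_1$ that $\Gamma_2$ sends to positive integers outside $\beta$ (for instance, to brand-new part labels or to already-large parts) are irrelevant to the count, and including them would only weaken the lower bound on $|\{x:\Gamma_1(x)=0\}|$; hence it is safe, and cleanest, to restrict attention to the reassignments into $\beta$.
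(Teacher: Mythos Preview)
Your argument is correct and is, in fact, cleaner than the paper's own proof. Both proofs rest on the same counting idea---there are not enough $0$-elements to bring every deficient part in $\beta$ up to size $>\delta$---but the execution differs. The paper proceeds constructively: it sets $\Gamma_2=\Gamma_1$, then greedily reassigns each $0$-element to some currently deficient part, and observes that after all $0$-elements are used up the running deficit $\sum_{j\in\pi}(\delta-|\{x:\Gamma_2(x)=j\}|+1)$ is still positive, so some part remains of size $\le\delta$. From this it concludes that \emph{all} extensions fail, implicitly relying on the intuition that this greedy allocation is the most favorable possible. Your contradiction argument avoids that implicit step entirely: you fix an arbitrary $\delta$-partition extension $\Gamma_2$, note that each part $j\in\beta$ must absorb at least $\delta-|\{x:\Gamma_1(x)=j\}|+1$ elements from the $0$-class of $\Gamma_1$, and use disjointness of the $\Gamma_2$-classes to sum these demands into a lower bound on $|\{x:\Gamma_1(x)=0\}|$ that violates hypothesis~(6). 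This handles all $\Gamma_2$ at once and makes the pigeonhole structure explicit, whereas the paper's version is shorter but leaves the universal quantifier over $\Gamma_2$ to the reader.
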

\begin{sproof}
Let $\Gamma_2=\Gamma_1$, and
\begin{equation}
\pi = \{j>0: 0 < |\{x:\Gamma_2(x)=j\}|\le \delta\},
\end{equation}
such that
\begin{equation}
|\{x:\Gamma_2(x)=0\}| < \sum_{j \in \pi} \delta - |\{x:\Gamma_2(x)=j\}|+1.
\end{equation}
Now, update $\Gamma_2$ such that for each $y$ with $\Gamma_2(y)=0$ do
\begin{enumerate}
\item[1.] $\Gamma_2(y) \gets j, \text{ for some } j \in \pi$;
\item[2.] $\pi \gets \{j>0: 0 < |\{x:\Gamma_2(x)=j\}|\le \delta\}$.
\end{enumerate}
Having applied steps 1-2 to each $y$ with $\Gamma_2(y)=0$, and given (8), we note that
\begin{equation}
0 < \sum_{j \in \pi} \delta - |\{x:\Gamma_2(x)=j\}|+1.
\end{equation}
Now, assume that $\pi=\emptyset$. Then, we rewrite (9) as $0<0$, which is impossible. Therefore, $\pi\neq \emptyset$, and so (9) implies that
\begin{equation}
\exists j>0 \text{ such that } 0<|\{x:\Gamma_2(x)=j\}| \le \delta.
\end{equation}
Hence, for all $\Gamma_2 \supseteq \{(x,j)\in \Gamma_1: j>0\}$, $\Gamma_2$ is not a $\delta$-partition of $\alpha$.\placeqed
\end{sproof}

\begin{proposition}
Let $\alpha=\{a_1, a_2, a_3, ..., a_n\}$ be a set of elements, $\delta <n$ be a non-negative integer, $\Gamma_1: \alpha \to \{0, 1, 2, 3, ..., n\}$ be a total mapping, and $\beta$ be a nonempty set such that
\begin{equation}
\beta = \{j>0: 0 < |\{x:\Gamma_1(x)=j\}|\le \delta\}
\end{equation}
with
\begin{equation}
|\{x:\Gamma_1(x)=0\}| = \sum_{j \in \beta} \delta - |\{x:\Gamma_1(x)=j\}|+1.
\end{equation}
Then, for all $\Gamma_2 \supseteq \{(x,j)\in \Gamma_1: j>0\}$ it holds that
\begin{equation}
\exists y~(\Gamma_1(y)=0 \wedge \Gamma_2(y)=k \wedge k \notin \beta) \implies \Gamma_2 \text{ is not a } \delta\text{-partition of } \alpha.
\end{equation}
\end{proposition}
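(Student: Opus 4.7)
The plan is to reduce Proposition~3 to Proposition~2 by absorbing the element $y$ into the mapping itself. The equality in~(12) should be read as saying that the zeros of $\Gamma_1$ form a \emph{tight} budget: they are exactly enough to raise every part indexed by $\beta$ up to size $\delta+1$, with no surplus to spend elsewhere. Diverting one of them to a class $k\notin\beta$ should therefore produce the strict deficit that triggers Proposition~2.

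First I will assume the antecedent of~(13), fix such a $y$ and $k$, and construct an auxiliary mapping $\Gamma_1'$ that agrees with $\Gamma_1$ everywhere except at $y$, where I set $\Gamma_1'(y)=k$. Because $(y,k)\in\Gamma_2$ by assumption, we get
$\{(x,j)\in\Gamma_1':j>0\}=\{(x,j)\in\Gamma_1:j>0\}\cup\{(y,k)\}\subseteq\Gamma_2$,
so any $\Gamma_2$ witnessing the antecedent of~(13) automatically extends the positive part of $\Gamma_1'$ as well, which is what is needed to apply Proposition~2 to $\Gamma_1'$ and obtain a statement about $\Gamma_2$.

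Second, I will verify that $\Gamma_1'$ satisfies the strict-inequality hypothesis of Proposition~2 for its associated set $\beta'=\{j>0:0<|\{x:\Gamma_1'(x)=j\}|\le\delta\}$. This hinges on a case split capturing the two ways $k$ can fail to lie in $\beta$. If $|\{x:\Gamma_1(x)=k\}|=0$, then $k$ is fresh and $|\{x:\Gamma_1'(x)=k\}|=1$; since $\beta\neq\emptyset$ forces $\delta\ge 1$, class $k$ enters $\beta'$, inflating the right-hand side of the required inequality by $\delta$, while the zero-count on the left drops by exactly $1$. If instead $|\{x:\Gamma_1(x)=k\}|>\delta$, then $|\{x:\Gamma_1'(x)=k\}|>\delta$ also, so $\beta'=\beta$ and the right-hand side is unchanged while the left-hand side still drops by $1$. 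Either branch converts the equality in~(12) into the strict inequality demanded by Proposition~2.

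Invoking Proposition~2 on $\Gamma_1'$ then yields that $\Gamma_2$ is not a $\delta$-partition of $\alpha$, which is precisely the conclusion of~(13). The main obstacle I anticipate is the bookkeeping around the case split on $k$: one must recognise that $k\notin\beta$ collapses two quite different situations (a brand-new class versus an already-oversized one) and track how $\beta'$ and its associated summation shift in each. A secondary subtlety is noting that $\beta\neq\emptyset$ forces $\delta\ge 1$, which is exactly what lets a singleton class land in $\beta'$ in the fresh-class sub-case.
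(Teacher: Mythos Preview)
Your proposal is correct and follows essentially the same route as the paper: form an auxiliary mapping by reassigning $y$ to $k$, check that the equality in~(12) turns into the strict inequality required by Proposition~2, and then invoke Proposition~2 on that auxiliary mapping to conclude that $\Gamma_2$ is not a $\delta$-partition. Your explicit case split on whether $k$ names a fresh class or an already-oversized one is in fact more careful than the paper's argument, which simply asserts the strict inequality (its display~(16)) without separating the two situations or recomputing~$\beta$ for the modified mapping.
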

\begin{sproof}
Let $\Gamma_3 = \Gamma_1$. Rewrite (11) and (12) by replacing $\Gamma_1$ with $\Gamma_3$. Thus, for
\begin{equation}
\beta = \{j>0: 0 < |\{x:\Gamma_3(x)=j\}|\le \delta\},
\end{equation}
it holds that
\begin{equation}
|\{x:\Gamma_3(x)=0\}| = \sum_{j \in \beta} \delta - |\{x:\Gamma_3(x)=j\}|+1.
\end{equation}
Now, for some $y$ with $\Gamma_3(y)=0$, set $\Gamma_3(y) \gets k$ such that $k \notin \beta$. So, rewrite (15) as
\begin{equation}
|\{x:\Gamma_3(x)=0\}| < \sum_{j \in \beta} \delta - |\{x:\Gamma_3(x)=j\}|+1.
\end{equation}
Using Proposition 2, for all $\Gamma_2 \supseteq \{(x,j)\in \Gamma_3:j>0\}$, $\Gamma_2$ is not a $\delta$-partition of~$\alpha$. Recall, $\{(x,j)\in \Gamma_3:j>0\} \supseteq  \{(x,j)\in \Gamma_1: j>0\}$. \placeqed
\end{sproof}

We now give our procedure listed in Algorithm 1. Let $\alpha=$ $\{a_1,$ $a_2,$ $a_3,$ $...,a_n\}$ be a set of elements, $\delta < n$ be a non-negative integer, and $\Gamma: \alpha \to \{0, 1, 2, ...,n\}$ be a total mapping such that for all $x \in \alpha$, $\Gamma(x)=0$. If Algorithm~1 is called with \emph{partition}($\Gamma, 1, a_1, \delta$), then the algorithm computes all $\delta$-partitions of $\alpha$.

\begin{algorithm}
	\caption{partition($\Gamma$, $\rho$, $a_i$, $\delta$)}
	$\Gamma(a_i) \gets \rho$\;	
	$\beta \gets \{j>0: 0 < |\{x:\Gamma(x)=j\}|\le \delta\}$\;	
	\If{$\beta \neq \emptyset$}{		
		$\mu \gets \sum_{j \in \beta} \delta - |\{x:\Gamma(x)=j\}|+1$\;
		\If{$\mu > |\{x:\Gamma(x)=0\}|$}{
			return\;
		}
		\If{$\mu = |\{x:\Gamma(x)=0\}|$}{
			\ForEach{$k \in \beta$}{
				partition($\Gamma, k, a_{i+1}, \delta$)\;			
			}	
			return\;
		}
	}	
	\If{$i=n$}{report that $\Gamma$ is a $\delta$-partition\; return\;}	
	Let $m \in \{x>0 \mid \exists (a_k)_{k\le i}~\Gamma(a_k)=x$ and $\forall j\le i$,~$x\ge \Gamma(a_j)\}$\;
	\ForEach{$k \gets 1$ \emph{to} $m+1$}{
		partition($\Gamma, k, a_{i+1}, \delta$)\;			
	}	
	return\;
\end{algorithm} 

\paragraph{\textbf{Example}} Run Algorithm 1 to compute all $1$-partitions of $\alpha=\{a_1,$ $a_2,$ $a_3,$ $a_4\}$. Initially, call 
\begin{equation}
\text {partition}(\{(a_1,0),(a_2,0),(a_3,0),(a_4,0)\}, 1, a_1, 1).
\end{equation}
Referring to line 15 in Algorithm 1, for $k=1$, invoke
\begin{equation}
\text {partition}(\{(a_1,1),(a_2,0),(a_3,0),(a_4,0)\}, 1, a_2, 1).
\end{equation}
Applying line 15 in Algorithm 1, when $k=1$, run
\begin{equation}
\text {partition}(\{(a_1,1),(a_2,1),(a_3,0),(a_4,0)\}, 1, a_3, 1).
\end{equation}
Following line 15 in Algorithm 1, for $k=1$, apply
\begin{equation}
\text {partition}(\{(a_1,1),(a_2,1),(a_3,1),(a_4,0)\}, 1, a_4, 1).
\end{equation}
Applying lines 11--13 in Algorithm 1, report that  $\{(a_1,1),$ $(a_2,1),$ $(a_3,1),$ $(a_4,1)\}$ is a $1$-partition of $\alpha$, and then backtrack to (19). Referring to line 15 in Algorithm~1, for $k=2$, operate
\begin{equation}
\text {partition}(\{(a_1,1),(a_2,1),(a_3,1),(a_4,0)\}, 2, a_4, 1).
\end{equation}
At this stage, as line 6 of Algorithm 1 is performed, backtrack to (19), then to (18). Executing line 15 in Algorithm 1, for $k=2$, invoke
\begin{equation}
\text {partition}(\{(a_1,1),(a_2,1),(a_3,0),(a_4,0)\}, 2, a_3, 1).
\end{equation}
Referring to line 8 in Algorithm 1, for $k=2$, run 
\begin{equation}
\text {partition}(\{(a_1,1),(a_2,1),(a_3,2),(a_4,0)\}, 2, a_4, 1).
\end{equation}
Following lines 11--13 in Algorithm 1, report that $\{(a_1,1),$ $(a_2,1),$ $(a_3,2),$ $(a_4,2)\}$ is a $1$-partition of $\alpha$, and then return to (22), (18), then back to (17).
Applying line 15 in Algorithm 1, for $k=2$, operate
\begin{equation}
\text {partition}(\{(a_1,1),(a_2,0),(a_3,0),(a_4,0)\}, 2, a_2, 1).
\end{equation}
Referring to line 8 in Algorithm 1, when $k=1$, execute
\begin{equation}
\text {partition}(\{(a_1,1),(a_2,2),(a_3,0),(a_4,0)\}, 1, a_3, 1).
\end{equation}
Performing line 8 in Algorithm 1, for $k=2$, operate
\begin{equation}
\text {partition}(\{(a_1,1),(a_2,2),(a_3,1),(a_4,0)\}, 2, a_4, 1).
\end{equation}
Following lines 11--13, report that $\{(a_1,1),(a_2,2),(a_3,1),(a_4,2)\}$ is a $1$-partition of $\alpha$, and return to (25), then to (24). Referring to line 8 in Algorithm 1, for $k=2$, run
\begin{equation}
\text {partition}(\{(a_1,1),(a_2,2),(a_3,0),(a_4,0)\}, 2, a_3, 1).
\end{equation}
According to line 8 in Algorithm 1, for $k=1$, call
\begin{equation}
\text {partition}(\{(a_1,1),(a_2,2),(a_3,2),(a_4,0)\}, 1, a_4, 1).
\end{equation}
Running lines 11--13 in Algorithm 1, report that $\{(a_1,1),$ $(a_2,2),$ $(a_3,2),$ $(a_4,1)\}$ is a $1$-partition of $\alpha$, and next, return to (27), (24), then eventually back to (17). That completes the application of Algorithm 1. 

In the next two propositions we prove the correctness of Algorithm~1.
\begin{proposition}
Let $\alpha=\{a_1,a_2,a_3,...,a_n\}$ be a set of elements, $\delta < n$ be a non-negative integer, $\Gamma: \alpha \to \{0, 1, 2, ...,n\}$ be a total mapping such that for all $x \in \alpha$ it holds that $\Gamma(x)=0$, and let Algorithm~1 be invoked with \emph{partition}$(\Gamma, 1, a_1, \delta)$. Then, for every $\Gamma$ reported by Algorithm 1, $\Gamma$ is a $\delta$-partition of $\alpha$.
\end{proposition}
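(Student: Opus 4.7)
The plan is to prove soundness of Algorithm~1 by a careful control-flow analysis pinned to a single structural invariant about the recursive calls. I would first establish, by induction on the recursion depth, the invariant that whenever \emph{partition}$(\Gamma,\rho,a_i,\delta)$ is invoked by the algorithm, $\rho>0$, $\Gamma(a_j)>0$ for every $j<i$, and $\Gamma(a_k)=0$ for every $k>i$. The base case is immediate from the top-level call \emph{partition}$(\Gamma,1,a_1,\delta)$. For the inductive step, the only places that issue recursive calls are lines~9 and~16; in both, the advanced index is $a_{i+1}$ and the new $\rho$ is drawn either from $\beta$ (so $\rho>0$ by the definition of $\beta$) or from $\{1,\dots,m+1\}$ (so $\rho>0$ as well). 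Moreover, line~1 has already set $\Gamma(a_i)\gets\rho>0$ before those recursive calls are made, so all three invariant clauses are preserved.

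Next, I would analyse what must be true at the moment line~12 reports $\Gamma$. To reach line~12, execution must have passed the block guarded by line~3 (either because $\beta=\emptyset$, or because $\mu<|\{x:\Gamma(x)=0\}|$, the only alternative to the early returns at lines~6 and~10), and line~11 must be satisfied, giving $i=n$. Combining $i=n$ with the invariant, after line~1 we have $\Gamma(a_j)>0$ for every $j\in\{1,\dots,n\}$, so $\Gamma$ is a partition of $\alpha$ and $|\{x:\Gamma(x)=0\}|=0$.

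The remaining task is to rule out nonempty $\beta$ at the point of reporting. If $\beta\neq\emptyset$ at line~3, then
\begin{equation*}
\mu \;=\; \sum_{j\in\beta}\bigl(\delta-|\{x:\Gamma(x)=j\}|+1\bigr) \;\ge\; |\beta|\;\ge\;1,
\end{equation*}
because every summand is at least $1$ by the definition of $\beta$. Together with $|\{x:\Gamma(x)=0\}|=0$ from the previous step, this gives $\mu>|\{x:\Gamma(x)=0\}|$, so line~5 triggers and the algorithm returns at line~6, contradicting the assumption that line~12 was reached. Hence $\beta=\emptyset$ when $\Gamma$ is reported, which, by the definition of $\beta$, forces $|\{x:\Gamma(x)=j\}|=0$ or $|\{x:\Gamma(x)=j\}|>\delta$ for every $j>0$; that is, every nonempty part of $\Gamma$ has more than $\delta$ elements, so $\Gamma$ is a $\delta$-partition of $\alpha$.

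The main obstacle is keeping the branch analysis tight: there are three exit points from line~3 (return at line~6, return after the loop at line~10, and fall-through to line~11), and one must be explicit that only the fall-through leads to line~12, and that this fall-through forces $\mu<|\{x:\Gamma(x)=0\}|$. Once this control-flow bookkeeping is pinned down, the key semantic observation is simply that $i=n$ makes $|\{x:\Gamma(x)=0\}|=0$, which is incompatible with $\beta\neq\emptyset$, so reporting occurs only in the fully-covered case with no undersized parts.
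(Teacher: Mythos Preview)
Your proof is correct, and it takes a different route from the paper's own argument. The paper proves soundness by maintaining, throughout the states leading to a report, the quantitative invariant
\[
|\{x:\Gamma^{(i)}(x)=0\}| \;>\; \sum_{j \in \beta^{(i)}} \bigl(\delta - |\{x:\Gamma^{(i)}(x)=j\}|+1\bigr),
\]
and then appeals to Propositions~2 and~3 to justify the pruning at lines~5--6 and the restriction at lines~7--9. In effect the paper tracks feasibility step by step and argues that the algorithm never continues along a branch that Propositions~2--3 rule out.

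You instead carry only the light structural invariant (which indices are already positive, which are still zero) and perform a single local analysis at the reporting point: $i=n$ forces $|\{x:\Gamma(x)=0\}|=0$, and then the summand lower bound $\mu\ge|\beta|\ge 1$ makes $\beta\neq\emptyset$ incompatible with falling through to line~12. This is more elementary for soundness---it does not need Propositions~2 or~3 at all---and the control-flow bookkeeping you flag (only the fall-through reaches line~12) is exactly the right thing to pin down. What the paper's heavier invariant buys is that it doubles as the scaffolding for completeness (Proposition~5), where one really does need to know that the pruning never discards a $\delta$-partition; your argument, being tailored to soundness, would not directly serve that purpose.
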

\begin{sproof}
Let $\Gamma^{(i)}$ denote the mappings of $\Gamma$ at some state $i$ of Algorithm 1. From now on, whenever we say ``state'' we mean a state of Algorithm~1. The algorithm enters a new state every time line 1 is applied. Observe that the algorithm might report a $\delta$-partition at state $n+1$, see line 11 in Algorithm~1. Thus, we focus on the states that lead to a $\delta$-partition. For the initial state, it is the case that
\begin{equation}
\forall x,~\Gamma^{(0)}(x)=0.
\end{equation}
And, for every $i \in \{0,1, 2, 3, ..., n\}$ it holds that
\begin{equation}
\Gamma^{(i+1)}= (\Gamma^{(i)} \setminus \{(a_{i+1},0)\}) \cup \{(a_{i+1},\rho)\},
\end{equation}
for some positive integer $\rho$, see line 1 in Algorithm 1. Now, we need to prove that for every state, $\Gamma$ can grow to a $\delta$-partition as described in Proposition 2 and 3. Starting with the case described in Proposition 2, we need to prove that for every state $i \in \{0, 1, 2, 3, ..., n\}$, it holds that
\begin{equation}
|\{x:\Gamma^{(i)}(x)=0\}| > \sum_{j \in \beta^{(i)}} \delta - |\{x:\Gamma^{(i)}(x)=j\}|+1,
\end{equation}
where 
\begin{equation}
\beta^{(i)}= \{j>0: 0 < |\{x:\Gamma^{(i)}(x)=j\}|\le \delta\}.
\end{equation}
For the initial state, $\beta^{(0)}=\emptyset$, recall (29). Thereby, (31) holds since 
\begin{equation}
\sum_{j \in \beta^{(0)}} \delta - |\{x:\Gamma^{(0)}(x)=j\}|+1 = 0
\end{equation}
and, recalling (29),
\begin{equation}
|\{x:\Gamma^{(0)}(x)=0\}|=n.
\end{equation}
Now, we will show that for all $i$, if (31) is true for $i$, then (31) holds for $i+1$. Assume that (31) holds for $i$. Using (32), we note that
\begin{equation}
\beta^{(i+1)}= \{j>0: 0 < |\{x:\Gamma^{(i+1)}(x)=j\}|\le \delta\}.
\end{equation}
Recall (30); if $\rho \in \beta^{(i+1)}$, then the inequality in (31) holds for $i+1$ since $(a_{i+1},0)$ is dropped from $\{x:\Gamma^{(i)}(x)=0\}$ in the left side of the inequality and $(a_{i+1},\rho)$ is added to $\{x:\Gamma^{(i)}(x)=\rho\}$ in the right side. Thus,
\begin{equation}
\{x:\Gamma^{(i+1)}(x)=0\}=\{x:\Gamma^{(i)}(x)=0\} \setminus \{(a_{i+1},0)\},
\end{equation}
and 
\begin{equation}
\{x:\Gamma^{(i+1)}(x)=\rho\}=\{x:\Gamma^{(i)}(x)=\rho\} \cup \{(a_{i+1},\rho)\}.
\end{equation}
Therefore,
\begin{equation}
|\{x:\Gamma^{(i+1)}(x)=0\}| > \sum_{j \in \beta^{(i+1)}} \delta - |\{x:\Gamma^{(i+1)}(x)=j\}|+1.
\end{equation}
Referring to (30), if $\rho \notin \beta^{(i+1)}$, then the inequality (31) might not hold for $i+1$; hence, it might be the case that
\begin{equation}
|\{x:\Gamma^{(i+1)}(x)=0\}| < \sum_{j \in \beta^{(i+1)}} \delta - |\{x:\Gamma^{(i+1)}(x)=j\}|+1,
\end{equation}
or
\begin{equation}
|\{x:\Gamma^{(i+1)}(x)=0\}| = \sum_{j \in \beta^{(i+1)}} \delta - |\{x:\Gamma^{(i+1)}(x)=j\}|+1.
\end{equation}
If (39) is the case, which is illustrated in Proposition 2, then according to lines 5--6, Algorithm~1 returns to a previous state to pick another positive integer to be assigned to $a_{i+1}$.
If (40) is the case, which is demonstrated in Proposition 3, the algorithm assigns $a_{i+2}$ to a positive integer from $\beta^{(i+1)}$, see lines 7--9 in Algorithm~1. \placeqed
\end{sproof}

\begin{proposition}
Let $\alpha=\{a_1,a_2,a_3,...,a_n\}$ be a set of elements, $\delta < n$ be a non-negative integer, $\Gamma: \alpha \to \{0, 1, 2, ...,n\}$ be a total mapping such that for all $x \in \alpha$ it holds that $\Gamma(x)=0$, and let Algorithm~1 be invoked with \emph{partition}$(\Gamma, 1, a_1, \delta)$. Every $\delta$-partition of $\alpha$ is computed by the algorithm.
\end{proposition}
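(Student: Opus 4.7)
\begin{sproof}
The plan is to fix an arbitrary $\delta$-partition $\Gamma^*$ of $\alpha$ and construct a trace of recursive calls in Algorithm~1 that terminates by reporting $\Gamma^*$. First, using Proposition~1, I would reduce to the case where $\Gamma^*$ is in ``canonical'' form, meaning $\Gamma^*(a_1)=1$ and, for every $i>1$, $\Gamma^*(a_i)\le m+1$ where $m=\max_{j\le i-1}\Gamma^*(a_j)$. Because any $\delta$-partition is identical to a canonical one (iterate the swap in Proposition~1 from left to right), and identical partitions are counted once, it suffices to show that each canonical $\delta$-partition is produced.

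Next I would proceed by induction on $i$, proving the loop invariant: after the call that sets $\Gamma(a_i)\gets\rho$ with $\rho=\Gamma^*(a_i)$, the algorithm performs a recursive call on $a_{i+1}$ with argument $k=\Gamma^*(a_{i+1})$. The base case $i=1$ is immediate since the top-level call is \emph{partition}$(\Gamma,1,a_1,\delta)$ and, by canonicity, $\Gamma^*(a_1)=1$. For the inductive step, assume the state after line~1 coincides with $\Gamma^*$ restricted to $\{a_1,\ldots,a_i\}$. I would then show that neither of the two pruning branches (lines~5--6 and lines~7--9) excludes the value $k=\Gamma^*(a_{i+1})$:
\begin{itemize}
\item[(i)] The early return at lines~5--6 fires exactly under the hypothesis of Proposition~2; but then no extension of the current $\Gamma$ is a $\delta$-partition, contradicting the existence of $\Gamma^*$. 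Hence this branch cannot be taken on the trace toward $\Gamma^*$.
\item[(ii)] The restricted loop at lines~7--9 fires exactly under the hypothesis of Proposition~3; by the contrapositive of Proposition~3, any extension of $\Gamma$ to a $\delta$-partition must assign $a_{i+1}$ to some $k\in\beta$, so $\Gamma^*(a_{i+1})\in\beta$ and the loop does make the call with $k=\Gamma^*(a_{i+1})$.
\item[(iii)] When neither pruning branch fires, control reaches line~15 and iterates $k$ from $1$ to $m+1$. By canonicity of $\Gamma^*$ established in the first paragraph, $\Gamma^*(a_{i+1})\in\{1,\ldots,m+1\}$, so again the desired recursive call is issued.
\end{itemize}

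Finally, when $i=n$ the invariant says the algorithm has built $\Gamma=\Gamma^*$, and the check at line~11 fires before any pruning (since a $\delta$-partition has $\beta=\emptyset$, so the guard at line~3 is false), causing $\Gamma^*$ to be reported on line~12. The main obstacle is the careful bookkeeping in the inductive step, specifically verifying that the pruning conditions (Propositions~2 and~3) truly correspond to lines~5--6 and~7--9 of the algorithm, and that neither can discard the canonical choice $\Gamma^*(a_{i+1})$; once this alignment is in place, the rest is a direct induction driven by Proposition~1.
\end{sproof}
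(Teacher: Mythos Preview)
Your argument is correct and takes a genuinely different route from the paper's. The paper proceeds by contradiction: assuming some $\delta$-partition $\varphi$ is missed, it introduces for each state $s$ the sets $\Lambda^{(s)}_i=\{1,\dots,m+1\}$ of values the algorithm could try for $a_i$ and $\sigma^{(s)}_i\subseteq\Lambda^{(s)}_i$ of values compatible with reaching a $\delta$-partition, and then argues via Propositions~1--4 that some state $s$ must satisfy $\Gamma^{(s)}=\varphi$. You instead build the accepting trace directly by induction on $i$, checking at each step that the recursive call with $k=\Gamma^*(a_{i+1})$ is issued: Proposition~2 handles the hard prune at lines~5--6, the contrapositive of Proposition~3 forces $\Gamma^*(a_{i+1})\in\beta$ under the soft prune at lines~7--9, and canonicity via Proposition~1 places $\Gamma^*(a_{i+1})$ in $\{1,\dots,m+1\}$ for the unpruned branch. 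Your approach is more operational and makes the line-by-line correspondence with Propositions~2 and~3 explicit, which is cleaner; the paper's version is terser but rests on the somewhat opaque biconditional defining $\sigma$. Both arguments share the same opening move---reducing to canonical representatives and reading ``computed'' as ``computed up to identity''---which the paper also makes explicit when it dismisses any $\varphi$ identical to an already-computed partition.
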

\begin{sproof}
Assume that $\varphi$ is a $\delta$-partition of $\alpha$, but $\varphi$ is not computed by the algorithm. Thus, for every state $s$
\begin{equation}
\varphi \text{ is not identical to } \Gamma^{(s)}.
\end{equation}
We consider the case where $\varphi$ is not identical to any $\delta$-partition computed by the algorithm; otherwise, there is no need to compute such $\varphi$ as demonstrated in Proposition 1. Our purpose is to establish a contradiction with (41). Thus, we define
\begin{equation}
\Lambda^{(r)}_i = \{1, 2, 3, ..., m+1\}
\end{equation}
to be the set of positive integers for some $a_i \in \alpha$ at some state $r$ such that $m=0$ if $i=1$; for $i>1$, $m=\Gamma^{(r)}(a_k)$ for some $k<i$ such that $m\ge \Gamma^{(r)}(a_j)$ for all $j<i$. Additionally, we define $\sigma^{(t)}_i \subseteq \Lambda^{(t)}_i$ to be the set of positive integers for some $a_i \in \alpha$ at some state $t$ such that for every state $s$ it is the case that
\begin{equation}
\forall a_i ~\Gamma^{(s)}(a_i) \in \sigma^{(s)}_i\iff \Gamma^{(s)} \text{ is a } \delta\text{-partition of } \alpha.
\end{equation}
Since $\varphi$ is a $\delta$-partition of $\alpha$, it is the case that for all positive integers~$i$
\begin{equation}
1\le \varphi(a_i)\le m+1,
\end{equation}
where $m=0$ if $i=1$; for $i>1$, $m=\varphi(a_{k})$ for some $k<i$ such that $m\ge \varphi(a_j)$ for all $j<i$.
Considering (42) and (44), we note that
\begin{equation}
\exists s~ \forall a_i~\varphi(a_i) \in \Lambda^{(s)}_i,
\end{equation}
which is in line with Proposition 1 that is implemented at lines 15--16 in the algorithm. Referring to (43),
\begin{equation}
\exists s \forall a_i ~\varphi(a_i) \in \sigma^{(s)}_i;
\end{equation}
otherwise $\varphi$ is not a $\delta$-partition as established in Proposition 2 \& 3 and implemented at lines 2--10 in the algorithm. Following (43),
\begin{equation}
\exists s ~\forall a_i ~\varphi(a_i) = \Gamma^{(s)}(a_i) \text{ and } \Gamma^{(s)} \text{ is a } \delta\text{-partition of } \alpha,
\end{equation}
which is consistently with Proposition 4 that shows the correctness of Algorithm~1. Subsequently,
\begin{equation}
\exists s ~\varphi = \Gamma^{(s)} \text{ and } \Gamma^{(s)} \text{ is a } \delta\text{-partition of } \alpha.
\end{equation}
See the contradiction between (48) and (41). \placeqed
\end{sproof}
\begin{proposition}
Let $\alpha=\{a_1,a_2,a_3,...,a_n\}$ be a set of elements, $\delta < n$ be a non-negative integer, $\Gamma: \alpha \to \{0, 1, 2, ...,n\}$ be a total mapping such that $\Gamma(x)=0$ for all $x \in \alpha$, and let Algorithm~1 be invoked with \emph{partition}$(\Gamma, 1, a_1, \delta)$. Then,\\
(1) Algorithm 1 runs in $\Theta(n)$ space. \\
(2) Algorithm 1 runs in $\Omega(n)$ time in the best case scenario. \\
(3) Algorithm 1 runs in $\mathcal{O}(nB_n)$ time in the worst-case scenario, where $B_n$ is the Bell number.
\end{proposition}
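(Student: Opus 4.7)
The plan is to handle the three parts in sequence by analyzing the algorithm's recursion tree. For (1), I would observe that $\Gamma$ is the only data structure of nontrivial size and is passed by reference, so it contributes $\Theta(n)$ cells in total. The recursion depth is at most $n$, since each call advances the index from $a_i$ to $a_{i+1}$. With the auxiliary quantities $\beta$ and $m$ recomputed from $\Gamma$ on demand rather than persisted across recursive frames, each stack frame carries only $O(1)$ private bookkeeping, yielding $\Theta(n)$ total; the matching $\Omega(n)$ is trivial since $\Gamma$ alone has $n$ entries.

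For (2), the lower bound is immediate: any invocation of \emph{partition} executes line 2, which scans $\Gamma$ to build $\beta$, costing $\Omega(n)$ even at the root call. Tightness can be witnessed by taking $\delta = n-1$, where $\{\alpha\}$ is the unique $\delta$-partition; in that case each depth $1 \le i < n$ yields $\beta = \{1\}$, $\mu = n-i = |\{x:\Gamma(x)=0\}|$, and the equality branch at lines 7--10 fires with a single $k=1$, forcing a straight-line recursive descent of $n$ calls before reporting and terminating.

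For (3), my plan is to decouple the total number of recursive calls from the work per call. By Proposition 1, the constraint $k \le m+1$ at line 15 ensures that every state $\Gamma^{(i)}$ reached at depth $i$ is the canonical labeling of a distinct partition of $\{a_1,\ldots,a_i\}$; pruning at lines 5--10 can only delete subtrees, never create new states. Hence the number of depth-$i$ calls is at most $B_i$, and the total number of recursive calls is $\sum_{i=0}^{n} B_i$, which is $O(B_n)$ by the super-polynomial growth of the Bell numbers (the tail sums geometrically once $B_i / B_{i-1} \ge 2$). Each call performs $O(n)$ work for computing $\beta$, $\mu$, $|\{x:\Gamma(x)=0\}|$, and $m$, so the total running time is $O(n B_n)$.

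The step I expect to need the most care is the counting argument in (3): I must verify that each canonical partial labeling is visited at most once, which requires combining the monotone-advance structure of line 15 with the ``no revisit'' nature of backtracking. A secondary delicate point is justifying $\sum_{i=0}^{n} B_i = O(B_n)$ rather than settling for the crude bound $\sum_{i=0}^{n} B_i \le n B_n$, since the latter, combined with $O(n)$ per-call work, would yield only the weaker $O(n^2 B_n)$. The argument for (1) also relies on an implementation note (recomputing $\beta$ and $m$ rather than storing them in each frame) that is not explicit in the pseudocode but is needed to avoid a stack of $O(n^2)$ space.
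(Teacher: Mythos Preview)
Your plan is correct and would yield the stated bounds, but it diverges from the paper's proof in parts (1) and (3). For (1), the paper does not argue that $\beta$ and $\mu$ are recomputed from $\Gamma$ each call; instead it introduces a global count array $\mathcal{B}(j)=|\{x:\Gamma(x)=j\}|$ and a running scalar $\mathcal{M}$, shows via an induction on states that $\mathcal{M}=\mu$ under the update rule $\mathcal{M}\gets\mathcal{M}-1$ or $\mathcal{M}\gets\mathcal{M}+\delta$, and concludes that lines 2--4 cost $O(1)$ time and $\Theta(n)$ total space. Your recompute-on-demand approach also gives $\Theta(n)$ space but trades away the $O(1)$ per-call update; this is exactly why you then need, in (3), the sharper node count $\sum_{i\le n} B_i = O(B_n)$ together with $O(n)$ work per call. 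The paper, having secured $O(1)$ per call in (1), instead argues (3) by specializing to $\delta=0$ and crudely charging each of the $B_n$ output partitions at most $n$ executions of line~1, getting $O(nB_n)$ without ever summing $B_i$. So the two routes are complementary: the paper front-loads an incremental data-structure lemma and then uses a loose leaf-times-depth count, while you keep the per-call analysis naive and compensate with a tighter recursion-tree count and the growth fact $B_i/B_{i-1}\to\infty$. For (2), both you and the paper invoke $\delta=n-1$; note that your first $\Omega(n)$ argument (line~2 scans $\Gamma$) is implementation-dependent and would not apply under the paper's $\mathcal{B}/\mathcal{M}$ scheme, so it is your second argument---the forced straight-line descent of $n$ calls---that matches the paper and is the robust one.
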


\begin{proof}
(1) Apart from the linear space (stack) required to execute the recursion of Algorithm 1, and the linear space imposed by using the mapping $\Gamma$ inputted to the algorithm, an additional space is required. Note, the set $\beta$ (line 2 in the algorithm) can be implemented instead as a total mapping $\mathcal{B}: \{1,2,3, ..., n\} \to \{0,1,2,3,...,n\}$ such that in the initial state of the algorithm we set $\mathcal{B}(j)=0$ for all $j$. Thus, we can update $\mathcal{B}(j) \gets \mathcal{B}(j) +1$ in constant time whenever we map an element of $\alpha$ to some part $j$, see line 1 in the algorithm. Consequently, for any state of the algorithm, for any $j$ it holds that \begin{equation}\mathcal{B}(j)=|\{x:\Gamma(x)=j\}|.\end{equation} Now we show that $\mu$ (line 4 in the algorithm) can be computed in constant time. Hence, in the initial state of Algorithm 1 we set $\mathcal{M} =0$. Whenever we apply $\mathcal{B}(k) \gets \mathcal{B}(k) +1$ for some $k$, $\mathcal{M}$  is updated subsequently (in constant time) as follows
\begin{equation}
    \mathcal{M} =
    \begin{cases*}
      \mathcal{M} - 1 & if $\mathcal{B}(k) >1$ \\
      \mathcal{M} + \delta & if $\mathcal{B}(k)=1$
    \end{cases*}
\end{equation}
We now prove inductively that $\mathcal{M}$ is equivalent to $\mu$ that is employed by Algorithm 1 at line 4. For the initial state of Algorithm 1, it is obvious that $\mu^{(0)} = \mathcal{M}^{(0)}$. Next, we demonstrate that for all states $i$ it is the case that 
\begin{equation}
\mathcal{M}^{(i)}=\mu^{(i)} \implies \mathcal{M}^{(i+1)}=\mu^{(i+1)}.
\end{equation}
Suppose $\mathcal{M}^{(i)}=\mu^{(i)}$. According to line 4 in the algorithm, we write
\begin{equation}
\mathcal{M}^{(i)} = \mu^{(i)} =  \sum_{j \in \{l>0:0<|\{x:\Gamma^{(i)}(x)=l\}|\le \delta\}} \delta - |\{x:\Gamma^{(i)}(x)=j\}|+1.
\end{equation}
If we set $\Gamma^{(i+1)}(x)=k$ for some $x \in \alpha$ at some state $i+1$ where $\mathcal{B}^{(i)}(k) > 0$, then $\mathcal{B}^{(i+1)}(k)>1$. Therefore, using (50) we have
\begin{equation}
\mathcal{M}^{(i+1)}=\mathcal{M}^{(i)} - 1 =\mu^{(i)} -1.
\end{equation}
Observe, 
\begin{equation}
|\{x:\Gamma^{(i+1)}(x)=k\}|=|\{x:\Gamma^{(i)}(x)=k\}| + 1.
\end{equation}
Thus,
\begin{equation}
\mathcal{M}^{(i+1)} =\mu^{(i)} - 1= -1 + \sum_{j \in \{l>0:0<|\{x:\Gamma^{(i)}(x)=l\}|\le \delta\}} \delta - |\{x:\Gamma^{(i)}(x)=j\}|+ 1.
\end{equation}
Using (54), (55) can be rewritten as
\begin{equation}
\mathcal{M}^{(i+1)} =  \sum_{j \in \{l>0:0<|\{x:\Gamma^{(i+1)}(x)=l\}|\le \delta\}} \delta - |\{x:\Gamma^{(i+1)}(x)=j\}|+1 =\mu^{(i+1)}.
\end{equation}
For the second case of (50), if we set $\Gamma^{(i+1)}(x)=k$ for some $x \in \alpha$ at some state $i+1$ with $\mathcal{B}^{(i)}(k)=0$, then $\mathcal{B}^{(i+1)}(k)=1$. Therefore, using (50) we write
\begin{equation}
\mathcal{M}^{(i+1)}=\mathcal{M}^{(i)} + \delta=\mu^{(i)} + \delta.
\end{equation}
Observe, 
\begin{equation}
\mu^{(i+1)} = \sum_{j \in \{l>0:0<|\{x:\Gamma^{(i+1)}(x)=l\}|\le \delta\}} \delta - |\{x:\Gamma^{(i+1)}(x)=j\}|+1.
\end{equation}
Subsequently,
\begin{equation}
\mu^{(i+1)}= (\delta - |\{x:\Gamma^{(i+1)}(x)=k\}|+1) + \mu^{(i)} = \delta + \mu^{(i)}=\mathcal{M}^{(i)} + \delta=\mathcal{M}^{(i+1)}.
\end{equation}
Note that \begin{equation}|\{x:\Gamma^{(i+1)}(x)=k\}|=\mathcal{B}^{(i+1)}(k)=1.\end{equation}
\\(2) Apply Algorithm 1 with $\delta = n-1$. Then, the algorithm will perform operations in the order of $n$ until the only $\delta$-partition of $\alpha$ is produced. Thus, the lower-bound time $\Omega(n)$ follows immediately. \\
(3) If $\delta=0$, then the algorithm has to compute all partitions of $\alpha$. It is well known that the number of partitions of $n$ elements is equal to the Bell number $B_n$. Since constructing each partition might take at maximum $n$ operations (i.e. assigning $n$ elements to parts, see line 1 in Algorithm 1), the overall running time of Algorithm 1 is upper-bounded by $\mathcal{O}(nB_n)$.
\end{proof}

\section{Concluding Remarks}
\paragraph{Remark 1} Let $\alpha=\{a_1,a_2,a_3,...,a_n\}$ be a set of elements, and $\delta < n$ be a non-negative integer. It follows directly from Proposition 6 that Algorithm~1 generally is faster than any algorithm that generates all partitions of $\alpha$ for computing the $\delta$-partitions of $\alpha$. Observe that for $\delta = 0$, the $\delta$-partitions of $\alpha$ coincide with the partitions of $\alpha$. Likewise, it is worth noting that if we drop the lines 2--10 from Algorithm 1, then we end up with an algorithm that computes all partitions of $\alpha$. Nonetheless, the running-time efficiency achieved by Algorithm~1 comes at the expense of an additional $\Theta(n)$ space.
\paragraph{Remark 2}  A related open problem is to decide an asymptotic lower-bound time for computing different kinds of restricted set partitions, such as set partitions with certain groups of elements being mapped to the same integers (i.e. parts), or the contrary, being mapped to strictly different integers.

\bibliographystyle{elsarticle-num}
\bibliography{mybibfile}

\end{document}